\begin{document}

\newenvironment{proof}[1][Proof]{\textbf{#1.} }{\ \rule{0.5em}{0.5em}}

\newtheorem{theorem}{Theorem}[section]
\newtheorem{definition}[theorem]{Definition}
\newtheorem{lemma}[theorem]{Lemma}
\newtheorem{remark}[theorem]{Remark}
\newtheorem{proposition}[theorem]{Proposition}
\newtheorem{corollary}[theorem]{Corollary}
\newtheorem{example}[theorem]{Example}

\numberwithin{equation}{section}
\newcommand{\ep}{\varepsilon}
\newcommand{\R}{{\mathbb  R}}
\newcommand\C{{\mathbb  C}}
\newcommand\Q{{\mathbb Q}}
\newcommand\Z{{\mathbb Z}}
\newcommand{\N}{{\mathbb N}}

\newcommand{\bfi}{\bfseries\itshape}

\newsavebox{\savepar}
\newenvironment{boxit}{\begin{lrbox}{\savepar}
\begin{minipage}[b]{15.5cm}}{\end{minipage}\end{lrbox}
\fbox{\usebox{\savepar}}}

\title{{\bf On a Hamiltonian version of a 3D Lotka-Volterra system}}
\author{R\u{a}zvan M. Tudoran and Anania G\^\i rban}

\date{}
\maketitle \makeatother

\begin{abstract}
In this paper we present some relevant dynamical properties
of a 3D Lotka-Volterra system from the Poisson dynamics point of
view.
\end{abstract}

\medskip

\textbf{AMS 2000}: 70H05; 37J25; 37J35.

\textbf{Keywords}: Hamiltonian dynamics, Lotka-Volterra
system, stability of equilibria, Poincar\'e compactification, energy-Casimir mapping.

\section{Introduction}
\label{section:one}

The Lotka-Volterra system has been widely investigated in the last
years. This system, studied by May and Leonard \cite{may}, models the evolution of competition between three species. Among
the studied topics related with the Lotka-Volterra system, we
recall a few of them together with a partial list of references, namely: integrals and invariant manifolds (\cite{leach1}, \cite{llibre}), stability (\cite{may1}, \cite{leach1}), analytic behavior \cite{leach}, nonlinear analysis \cite{may}, and many others.

In this paper we consider a special case of the Lotka-Volterra system, recently introduced in \cite{llibre}. We write the system as a Hamiltonian system of Poisson type in order to analyze the system from
the Poisson dynamics point of view. More exactly, in the second
section of this paper, we prepare the framework of our study
by writing the Lotka-Volterra system as a Hamilton-Poisson system, and
also find a $SL(2,\mathbb{R})$ parameterized family of
Hamilton-Poisson realizations. As consequence of the Hamiltonian setting we obtain two new first integrals of the Lotka-Volterra system that generates the first integrals of this system found in \cite{llibre}. In the third section of the paper
we determine the equilibria of the Lotka-Volterra system and then
analyze their Lyapunov stability. The fourth section is
dedicated to the study of the Poincar\'e compactification of the Lotka-Volterra system. More exactly, we integrate explicitly the Poincar\'e compactification of the Lotka-Volterra system. In the fifth section of the article we present some
convexity properties of the image of the energy-Casimir mapping
and define some naturally associated semialgebraic splittings of
the image. More precisely, we discuss the relation between
the image through the energy-Casimir mapping of the families of equilibria of the Lotka-Volterra
system and the canonical Whitney stratifications of the
semialgebraic splittings of the image of the energy-Casimir
mapping. In the sixth part of the paper we give a
topological classification of the fibers of the energy-Casimir
mapping, classification that follows naturally from the
stratifications introduced in the above section. Note that in our
approach we consider fibers over the regular and also over
the singular values of the energy-Casimir mapping. In the last part of the article we give two Lax
formulations of the system. For details on Poisson geometry and
Hamiltonian dynamics see e.g. \cite{marsden}, \cite{marsdenratiu},
\cite{cushman}, \cite{ratiu}.
\medskip

\section{Hamilton-Poisson realizations of a 3D Lotka-Volterra system}
The Lotka-Volterra system we consider for our study, is governed by the equations:
\begin{equation}\label{sys}
\left\{ \begin{array}{l}
 \dot x = -x(x-y-z) \\
 \dot y = -y(-x+y-z) \\
 \dot z = -z(-x-y+z). \\
 \end{array} \right.
\end{equation}
Note that the above system is the Lotka-Volterra system studied in \cite{llibre} in the case $a=b=-1$. 

In \cite{llibre} it is shown that this system admits the following polynomial conservation laws:
$$
f(x,y,z)=xyz(x-y)(x-z)(y-z),
$$
$$
g(x,y,z)=x^2 y^2 -x^2 yz -xy^2 z+x^2 z^2 -xy z^2 +y^2 z^2.
$$

Using Hamiltonian setting of the problem, we provide two degree-two polynomial conservation laws of the system \eqref{sys} which generates $f$ and $g$. These conservation laws will be represented by the Hamiltonian and respectively a Casimir function of the Poisson configuration manifold of the system \eqref{sys}.  

As the purpose of this paper is to study the above system from the
Poisson dynamics point of view, the first step in this approach is
to give a Hamilton-Poisson realization of the system.
\begin{theorem}\label{t22}
The dynamics \eqref{sys} has the following Hamilton-Poisson
realization:
\begin{equation}\label{hc}
(\R^3,\Pi_{C},H)
\end{equation}
where,
$$\Pi_{C}(x,y,z)=\left[ {\begin{array}{*{20}c}
   0 & y & {x-z}  \\
   -y & 0 & {-y}  \\
   {-x+z} & {y} & 0  \\
\end{array}} \right]$$
is the Poisson structure generated by the smooth function
$C(x,y,z):=-xy+yz$, and the Hamiltonian
$H\in C^\infty(\R^3,\R)$ is given by
$H(x,y,z):=xy-xz$.

Note that, by Poisson structure generated by the smooth function
$C$, we mean the Poisson structure generated by the Poisson
bracket $\{f,g\}:=\nabla C\cdot(\nabla f\times\nabla g)$, for any
smooth functions $f,g\in C^\infty(\R^3,\R)$.
\end{theorem}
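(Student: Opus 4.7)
The plan is to verify the claim by three direct checks, all essentially computational. First, I would confirm that the given matrix really is the coordinate representation of the bracket $\{f,g\} = \nabla C \cdot (\nabla f \times \nabla g)$ for $C(x,y,z) = -xy+yz$. Since $\nabla C = (-y,\, -x+z,\, y)$, the entries $\{x,y\}$, $\{x,z\}$, $\{y,z\}$ of the structure matrix are obtained from the scalar triple products $\nabla C\cdot(e_i\times e_j)$; reading off the components of $\nabla C$ in the appropriate positions recovers exactly the matrix $\Pi_C$ displayed in the statement. That the bracket so defined is genuinely Poisson (i.e.\ antisymmetric and satisfies the Jacobi identity) is standard: brackets of the form $\nabla C \cdot (\nabla f\times\nabla g)$ on $\R^3$ are automatically Poisson for any smooth $C$, because the Jacobiator is a multiple of the triple scalar product of gradients of $C$, $f$, $g$ and vanishes identically.

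Next, I would compute $\nabla H = (y-z,\, x,\, -x)$ and apply the matrix $\Pi_C$ to it, row by row. Each of the three products $(\Pi_C \nabla H)_i$ is a polynomial of degree two; after expanding and factoring, the three coordinates should reduce to $-x(x-y-z)$, $-y(-x+y-z)$ and $-z(-x-y+z)$ respectively, which is exactly system \eqref{sys}. This is the main content of the theorem, but it is just a few lines of polynomial arithmetic and presents no conceptual obstacle.

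Finally, since $\Pi_C$ is built from the triple product with $\nabla C$, the function $C$ is automatically a Casimir: for any $f$, $\{C,f\} = \nabla C \cdot (\nabla C \times \nabla f) = 0$. Thus $(\R^3,\Pi_C,H)$ is a legitimate Hamilton–Poisson realization of \eqref{sys} with Casimir $C$.

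There is no real difficulty here; the only minor point requiring care is the sign conventions in the scalar triple product when reading off the entries of $\Pi_C$ from $\nabla C$, since it is easy to misplace a sign in $\{x,z\}$ (where $e_1 \times e_3 = -e_2$). Beyond that bookkeeping, the proof is a routine verification.
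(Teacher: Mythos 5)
Your proposal is correct and follows essentially the same route as the paper: the paper's proof consists precisely of the matrix--vector computation $\Pi_C(x,y,z)\cdot\nabla H(x,y,z)$ with $\nabla H=(y-z,\,x,\,-x)$, which yields the right-hand side of \eqref{sys}. Your additional checks (that $\Pi_C$ is the structure matrix of the bracket generated by $C$, that such brackets automatically satisfy the Jacobi identity, and that $C$ is a Casimir) are standard facts the paper leaves implicit, and they are all verified correctly.
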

\noindent \begin{proof} Indeed, we have successively:
$$\Pi_{C}(x,y,z)\cdot\nabla H(x,y,z)=
\left[ {\begin{array}{*{20}c}
   0 & {y} & {x-z}  \\
   {-y} & 0 & {-y }  \\
   {-x+z} & {y} & 0  \\
\end{array}} \right]\cdot
\left[ {\begin{array}{*{20}c}
   {y-z} \\
   {x}  \\
   {-x} \\
\end{array}} \right]=
\left[ {\begin{array}{*{20}l}
   {-x(x-y-z)} \\
   { -y(-x+y-z)}  \\
   {-z(-x-y+z)} \\
\end{array}} \right]=
\left[ {\begin{array}{*{20}r}
   {\dot x} \\
   {\dot y}  \\
   {\dot z} \\
\end{array}} \right],
$$
as required.
\end{proof}
\medskip
\begin{remark}
Since the signature of the quadratic form generated by $C(x,y,z)=-xy+yz$ is $(-1,0,+1)$, the triple $(\mathbb{R}^{3},\Pi_{C},H)$ it is isomorphic with a Lie-Poisson realization of the Lotka-Volterra system \eqref{sys} on the dual of the semidirect product between the Lie algebra $\mathfrak{so}(1,1)$ and $\mathbb{R}^2$.
\end{remark}

\begin{remark}
By definition we have that the center of the Poisson algebra
$C^\infty(\mathbb{R}^{3},\R)$ is generated by the Casimir
invariant $C(x,y,z)=-xy+yz$.
\end{remark}

\begin{remark}
The conservation laws $f$ and $g$ found in \cite{llibre}, can be written in terms of the Casimir $C$ and respectively the Hamiltonian $H$ as follows:
$$
f=CH(C+H),
$$
$$
g=\dfrac{1}{2}\left[C^2+H^2+(C+H)^2\right].
$$
\end{remark}

Next proposition gives others Hamilton-Poisson realizations of the Lotka-Volterra system \eqref{sys}.
\begin{proposition}
The dynamics \eqref{sys} admits a family of Hamilton-Poisson
realizations parameterized by the group $SL(2,\R)$. More exactly,
$(\R^3,\{\cdot,\cdot\}_{a,b},H_{c,d})$ is a
Hamilton-Poisson realization of the dynamics \eqref{sys} where the
bracket $\{\cdot,\cdot\}_{a,b}$ is defined by
$$\{f,g\}_{a,b}:=\nabla C_{a,b}\cdot(\nabla f\times\nabla g),$$
for any $f,g\in C^\infty(\R^3,\R)$, and the functions
$C_{a,b}$ and $H_{c,d}$ are given by:
$$C_{a,b}(x,y,z)=(-a+b)xy+ayz-bxz
,$$
$$H_{c,d}(x,y,z)=(-c+d)xy+cyz-dxz
,$$ respectively, the
matrix of coefficients $a,b,c,d$ is $\left[ {\begin{array}{*{20}c}
   a & b  \\
   c & d  \\
\end{array}} \right]\in SL(2,\R).$
\end{proposition}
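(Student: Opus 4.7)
The plan is to verify three things: (i) $\{\cdot,\cdot\}_{a,b}$ is indeed a Poisson bracket, (ii) $C_{a,b}$ is a Casimir of this bracket, and (iii) the associated Hamiltonian vector field of $H_{c,d}$ reproduces the system \eqref{sys} precisely when $ad-bc=1$. Steps (i) and (ii) come essentially for free: the Jacobian-type bracket $\{f,g\} := \nabla C\cdot(\nabla f\times\nabla g)$ on $\R^3$ satisfies the Jacobi identity for any smooth $C$ (this is the same construction used in Theorem~\ref{t22}), and setting $f=C_{a,b}$ makes $\nabla C_{a,b}\times\nabla g$ orthogonal to $\nabla C_{a,b}$, so $C_{a,b}$ automatically lies in the Poisson center.

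The content of the proposition is therefore step (iii). I would imitate the computation in the proof of Theorem~\ref{t22} and write out the Poisson matrix $\Pi_{C_{a,b}}$ explicitly from
\[
\nabla C_{a,b}=\bigl((-a+b)y-bz,\ (-a+b)x+az,\ ay-bx\bigr),
\]
then compute $\nabla H_{c,d}$ similarly by swapping $(a,b)\mapsto(c,d)$. Multiplying $\Pi_{C_{a,b}}\cdot\nabla H_{c,d}$ produces a vector whose three components are homogeneous quadratic polynomials in $x,y,z$ whose coefficients are explicit bilinear expressions in $a,b,c,d$.

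The key observation, and the only real thing to check, is that each coefficient in the resulting vector collapses to $\pm(ad-bc)$ times one of the monomials $x^2,xy,xz,y^2,yz,z^2$, with the remaining coefficients vanishing identically in $a,b,c,d$. For instance, the $x^2$-coefficient of $\dot x$ works out to $-(ad-bc)$, the $xy$ and $xz$ coefficients to $+(ad-bc)$, and the $yz$ coefficient to $0$; analogous cancellations occur for $\dot y$ and $\dot z$. Imposing $ad-bc=1$ then yields exactly the right-hand sides of \eqref{sys}.

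The only potential obstacle is bookkeeping in the coefficient expansion, but no conceptual difficulty arises: the structure of $C_{a,b}$ and $H_{c,d}$ as the \emph{same} linear combination $(-\ast+\ast)xy+\ast yz-\ast xz$ of the three quadratic monomials $xy,yz,xz$, differing only in the row of coefficients, is precisely what forces every cross-term to be a multiple of the determinant $ad-bc$. Setting $a=0,\ b=-1,\ c=1,\ d=-1$ (so that $ad-bc=1$) recovers Theorem~\ref{t22} as a consistency check.
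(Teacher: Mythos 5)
Your proposal is correct and follows essentially the same route as the paper, which simply exhibits the Poisson matrix $\Pi_{a,b}$ in coordinates and notes that the conclusion follows, i.e.\ that $\Pi_{a,b}\cdot\nabla H_{c,d}$ reproduces the right-hand side of \eqref{sys} up to the overall factor $ad-bc$; your coefficient bookkeeping (for instance the $x^2$-coefficient of $\dot x$ equal to $-(ad-bc)$, the $xy$ and $xz$ coefficients equal to $ad-bc$, and the $yz$ coefficient vanishing) checks out. One small slip in your final consistency check: the parameters recovering Theorem \ref{t22} are $a=1,\ b=0,\ c=0,\ d=1$ (the identity matrix), since $C_{1,0}=-xy+yz$ and $H_{0,1}=xy-xz$, whereas your choice $a=0,\ b=-1,\ c=1,\ d=-1$ yields $C_{0,-1}=-xy+xz$, which is not the Casimir of Theorem \ref{t22}.
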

\noindent \begin{proof} The conclusion follows directly taking
into account that the matrix formulation of the Poisson bracket
$\{\cdot,\cdot\}_{a,b}$ is given in coordinates by:
$$\Pi_{a,b}(x,y,z)=
\left[ {\begin{array}{*{20}c}
   0 & {-bx+ay} & {(a-b)x-az} \\
   {bx-ay} & 0 & {(-a+b)y-bz}  \\
   {(-a+b)x+az} & {(a-b)y+bz} & 0
\end{array}} \right].$$
\end{proof}
\medskip

\section{Stability of equilibria}
In this short section we analyze the stability properties of the
equilibrium states of the Lotka-Volterra system \eqref{sys}.
\begin{remark}
The equilibrium states of the system \eqref{sys} are given as the union
of the following three families:
$${\mathcal E}_{1}:=\{(0,M,M):M\in\R\},$$
$${\mathcal E}_{2}:=\{(M,0,M):M\in\R\},$$
$${\mathcal E}_{3}:=\{(M,M,0):M\in\R\}.$$
\end{remark}

Figure \ref{figech} presents the above defined families of equilibrium states of the Lotka-Volterra system.

\begin{figure}[H]
\vspace*{25pt}
\centering
\begin{tabular}{|c|}
\hline
\scalebox{0.9}{%
\includegraphics*{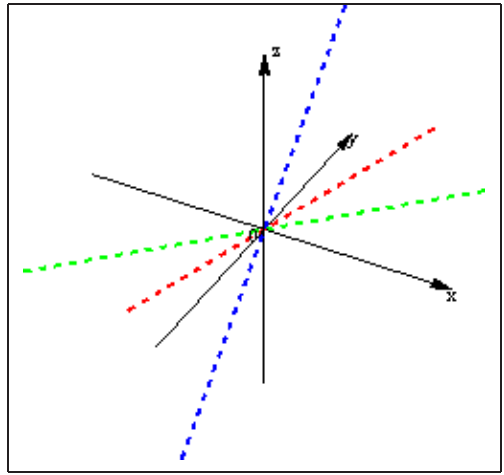}}\\
\hline
\end{tabular}
\caption{\it Equilibria of the Lotka-Volterra system \eqref{sys}.}\label{figech}
\end{figure}

In the following theorem we describe the stability
properties of the equilibrium states of the system \eqref{sys}.
\begin{theorem}\label{t33}
All the equilibrium states of the Lotka-Volterra system \eqref{sys} are unstable.
\end{theorem}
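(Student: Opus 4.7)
The strategy is spectral analysis: for each family $\mathcal{E}_{i}$ I would linearize the system~\eqref{sys} at a typical point, exhibit an eigenvalue with strictly positive real part, and conclude Lyapunov instability via the standard spectral criterion. A separate argument will be required at the common origin $(0,0,0)\in\mathcal{E}_{1}\cap\mathcal{E}_{2}\cap\mathcal{E}_{3}$, where the linearization vanishes.

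First I would write down the Jacobian $J(x,y,z)$ of the vector field defined by~\eqref{sys}; since the right-hand side is homogeneous of degree two, the entries of $J$ are linear in $(x,y,z)$. Evaluating at the representative $(0,M,M)\in\mathcal{E}_{1}$ yields a block lower triangular matrix whose characteristic polynomial factors as $(2M-\lambda)\,\lambda\,(\lambda+2M)$, producing the spectrum $\{-2M,\,0,\,2M\}$. Consequently, for every $M\neq 0$ at least one eigenvalue is strictly positive, so $(0,M,M)$ is unstable. The system~\eqref{sys} is invariant under the $S_{3}$-action permuting the coordinates $(x,y,z)$, and this action sends $\mathcal{E}_{1}$ onto $\mathcal{E}_{2}$ and $\mathcal{E}_{3}$; hence the same spectrum appears at every point of $\mathcal{E}_{2}$ and $\mathcal{E}_{3}$ with $M\neq 0$, and instability propagates to all three families.

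The remaining case is the origin, which lies in all three families. Here $J(0,0,0)=0$ because the field is purely quadratic, so linearization gives no information. To recover instability I would exploit the invariance of the coordinate axes: on the line $\{y=z=0\}$ the system reduces to the scalar equation $\dot x=-x^{2}$, which integrates explicitly to $x(t)=x_{0}/(1+x_{0}t)$. For any initial datum $x_{0}<0$ this solution escapes to $-\infty$ at the finite time $t=1/|x_{0}|$; since $|x_{0}|$ can be taken arbitrarily small, arbitrarily close initial conditions leave every bounded neighbourhood of the origin, which is therefore Lyapunov unstable.

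The only genuine obstacle is the degeneracy at the origin, where the quadratic nature of~\eqref{sys} kills the linearization and standard spectral methods fail. The one-dimensional blow-up argument on an invariant coordinate axis bypasses this cleanly, and everything else reduces to the short eigenvalue calculation above combined with the $S_{3}$-symmetry of~\eqref{sys} permuting the three families of equilibria.
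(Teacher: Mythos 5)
Your proposal is correct, and for the nonzero equilibria it coincides with the paper's argument: both compute the linearization at a point of $\mathcal{E}_{1}$ (the paper notes, as you do via the $S_{3}$-symmetry, that the characteristic polynomial $(2M-\lambda)\lambda(2M+\lambda)$ is the same for all three families) and invoke the positive eigenvalue $2|M|$ for $M\neq 0$. Where you genuinely diverge is at the origin. The paper disposes of $(0,0,0)$ by remarking that every neighbourhood of it contains unstable equilibria; as stated this is not a complete argument, since trajectories leaving a small neighbourhood of a nearby unstable equilibrium need not leave a fixed neighbourhood of the origin. Your alternative --- restricting to the invariant axis $\{y=z=0\}$, where the dynamics is $\dot x=-x^{2}$ with explicit solution $x(t)=x_{0}/(1+x_{0}t)$ blowing up in finite time for $x_{0}<0$ --- produces initial data arbitrarily close to the origin whose orbits leave every bounded set, which is a direct and airtight verification of Lyapunov instability there. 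So your route costs one extra explicit integration but buys a rigorous treatment of the one degenerate point where the spectral criterion is silent; the rest is the same eigenvalue computation as in the paper.
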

\noindent \begin{proof} 
The conclusion follows from the fact that the characteristic polynomial associated with the linear
part of the system evaluated at an arbitrary equilibrium state, is the same for any of the families $\mathcal {E}_{1},\mathcal {E}_{2},\mathcal {E}_{3}$, and is given by:
$$p(\lambda)=(2M-\lambda)\lambda(2M+\lambda).$$
For $M=0$, we get the origin $(0,0,0)$ which is also unstable since in any arbitrary small open neighborhood around, there exists unstable equilibrium states.
\end{proof}
\medskip

\section{The behavior on the sphere at infinity}

In this section we integrate explicitly the Poincar\'e compactification of the Lotka-Volterra system, and consequently the Lotka-Volterra system \eqref{sys} (on the sphere) at infinity. Recall that using the Poincar\'e compactification of $\R^3$, the infinity of $\R^3$ is represented by the sphere $\mathbb{S}^2$ - the equator of the unit sphere $\mathbb{S}^3$ in $\R^4$. For details regarding the Poincar\'e compactification of polynomial vector fields in $\R^3$ see \cite{llibre1}.

Fixing the notations in accordance with the results stated in \cite{llibre1} we write the Lotka-Volterra system \eqref{sys} as 
\begin{equation*}
\left\{ \begin{array}{l}
 \dot x = P^1(x,y,z) \\
 \dot y = P^2(x,y,z) \\
 \dot z = P^3(x,y,z), \\
 \end{array} \right.
\end{equation*}
with $P^1(x,y,z)=-x(x-y-z)$, $P^2(x,y,z)=-y(-x+y-z)$, $P^3(x,y,z)=-z(-x-y+z)$.

Let us now study the Poincar\'e compactification of the Lotka-Volterra system in the local charts $U_i$ and $V_i$, $i\in\{1,2,3\}$, of the manifold $\mathbb{S}^3$. 

The Poincar\'e compactification ($p(X)$ in the notations from \cite{llibre1}) of the Lotka-Volterra system \eqref{sys} is the same for each of the local charts $U_1$, $U_2$ and respectively $U_3$ and is given in the corresponding local coordinates by

\begin{equation}\label{compsys}
\left\{ \begin{array}{l}
 \dot z_1 = 2z_1 (1-z_1) \\
 \dot z_2 = 2z_2 (1-z_2) \\
 \dot z_3 = -z_3 (z_1 +z_2 +1). \\
 \end{array} \right.
\end{equation}

Regarding the Poincar\'e compactification of the Lotka-Volterra system in the local charts $V_1$, $V_2$ and respectively $V_3$, as a property of the compactification procedure, the compactified vector field $p(X)$ in the local chart $V_i$ coincides with the vector field $p(X)$ in $U_i$ multiplied by the factor $-1$, for each $i\in\{1,2,3\}$. Hence, for each $i\in\{1,2,3\}$, the flow of the system \eqref{compsys} on the local chart $V_i$ is the same as the flow on the local chart $U_i$ reversing the time.

The system \eqref{compsys} is integrable with the solution given by

\begin{equation*}
\left\{ \begin{array}{l}
 z_1(t) = \dfrac{e^{2t}}{e^{2t}+e^{k_1}} \\
 z_2(t) = \dfrac{e^{2t}}{e^{2t}+e^{k_2}}\\
 z_3(t) = \dfrac{e^{t}k_3}{\sqrt{e^{2t}+e^{k_1}}\sqrt{e^{2t}+e^{k_2}}}, \\
 \end{array} \right.
\end{equation*}
where $k_1, k_2, k_3$ are arbitrary real constants.

To analyze the Lotka-Volterra system on the sphere $\mathbb{S}^2$ at infinity, note that the points on the sphere at infinity are characterized by $z_3=0$. As the plane $z_1 z_2$ is invariant under the flow of the system \eqref{compsys}, the compactified Lotka-Volterra system on the local charts $U_i$ ($i\in\{1,2,3\}$) on the infinity sphere reduces to

\begin{equation}\label{infsys}
\left\{ \begin{array}{l}
 \dot z_1 = 2z_1 (1-z_1) \\
 \dot z_2 = 2z_2 (1-z_2).\\
 \end{array} \right.
\end{equation}

The system \eqref{infsys} is integrable and the solution is given by 
\begin{equation*}
\left\{ \begin{array}{l}
 z_1(t) = \dfrac{e^{2t}}{e^{2t}+e^{k_1}} \\
 z_2(t) = \dfrac{e^{2t}}{e^{2t}+e^{k_2}},\\
 \end{array} \right.
\end{equation*}
where $k_1, k_2$ are arbitrary real constants.

Consequently, the phase portrait on the local charts $U_i$ ($i\in\{1,2,3\}$) on the infinity sphere is given in Figure \ref{sphe}.

\begin{figure}[H]
\vspace*{25pt}
\centering
\scalebox{0.60}{%
\includegraphics*{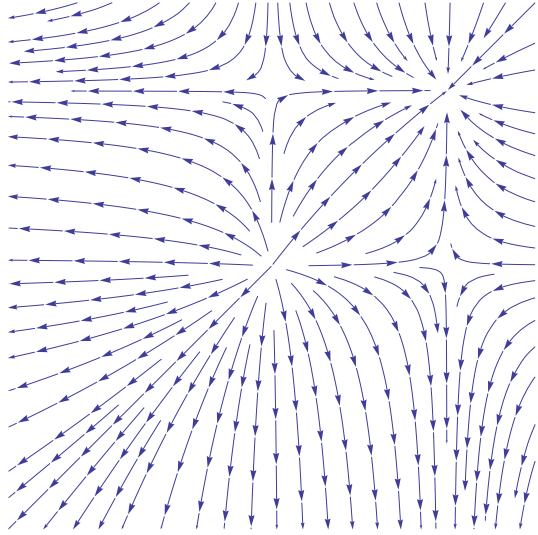}}
\caption{\it Phase portrait of the system \eqref{infsys}.} \label{sphe}
\end{figure}

\section{The image of the energy-Casimir mapping}

The aim of this section is to study the image of the energy-Casimir mapping $Im(\mathcal EC)$, associated with the Hamilton-Poisson realization \eqref{hc} of the Lotka-Volterra system \eqref{sys}.

We consider convexity properties of the image of $\mathcal EC$, as well as a semialgebraic splitting of the image that agree with the topology of the symplectic leaves of the Poisson manifold $(\R^3,\Pi_C)$. Recall that by a semialgebraic splitting, we mean a splitting consisting of semialgebraic manifolds, namely manifolds that are described in coordinates by a set of polynomial inequalities and equalities. For details on semialgebraic manifolds and their geometry see e.g. \cite{pflaum}.

All these will be used later on, in order to obtain a topological classification of the orbits of \eqref{sys}.

Recall first that the energy-Casimir mapping, ${\mathcal EC}\in C^\infty(\R^3,\R^2)$ is given by:
$${\mathcal EC}(x,y,z)=(H(x,y,z), C(x,y,z)),\ (x,y,z)\in\R^3$$
where $H,C\in C^\infty(\R^3,\R)$ are the Hamiltonian of the system \eqref{sys}, and respectively the Casimir of the Poisson manifold $(\R^3,\Pi_C)$, both of them as considered in Theorem \ref{t22}.

Next proposition explicitly gives the semialgebraic splitting of the image of the energy-Casimir map ${\mathcal EC}$.

\begin{proposition}\label{p51}
 The image of the energy-Casimir map - $\R^2$ - admits the following splitting:

$Im(\mathcal EC)=S^{c>0}\cup S^{c=0}\cup S^{c<0}$, where the subsets
$S^{c>0}$, $S^{c=0}$, $S^{c<0}\subset\R^2$ are splitting further on a union of semialgebraic manifolds, as follows:
\begin{align*}
 S^{c>0}&=\{(h,c)\in\R^2:h<0;c>0\}\cup\{(h,c)\in\R^2:h=0;c>0\}\\
&\cup\{(h,c)\in\R^2:h>0;c>0\},\\
 S^{c=0}&=\{(h,c)\in\R^2:h<0;c=0\}\cup\{(h,c)\in\R^2:h=c=0\}\\
&\cup\{(h,c)\in\R^2:h>0;c=0\},\\
 S^{c<0}&=\{(h,c)\in\R^2:c<\min\{-h,0\}\}\cup\{(h,c)\in\R^2:c<0;c=-h\}\\
&\cup\{(h,c)\in\R^2:-h<c<0\}.
\end{align*}
\end{proposition}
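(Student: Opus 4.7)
The plan is to verify two assertions: first, that $\mathrm{Im}(\mathcal{EC}) = \R^2$; second, that each of the nine listed subsets is a semialgebraic manifold and that the three unions $S^{c>0}, S^{c=0}, S^{c<0}$ partition $\R^2$ as claimed.

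For the surjectivity, I would fix an arbitrary target $(h,c) \in \R^2$ and construct an explicit preimage using the expressions $H(x,y,z) = x(y-z)$ and $C(x,y,z) = y(z-x)$. If $h=0$, the point $(x,y,z)=(0,1,c)$ works directly, since $H(0,1,c)=0$ and $C(0,1,c)=c$. If $h \neq 0$, I would set $x=t$ for a parameter $t\neq 0$ yet to be chosen. The equation $H=h$ then forces $z = y - h/t$, and substituting into $C=c$ reduces the problem to the quadratic
$$y^2 - \Bigl(t+\tfrac{h}{t}\Bigr)y - c = 0,$$
whose discriminant is $\Delta(t):=(t+h/t)^2+4c$. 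The only task is to choose $t\neq 0$ making $\Delta(t)\geq 0$. When $h>0$, AM-GM gives $(t+h/t)^2 \geq 4h$ and $(t+h/t)^2\to\infty$ as $|t|\to\infty$, so $\Delta(t)\geq 0$ holds for $|t|$ large enough. When $h<0$, the function $t+h/t$ vanishes at $t = \pm\sqrt{-h}$ and diverges as $|t|\to\infty$, so $(t+h/t)^2$ assumes every nonnegative value, and again $\Delta(t)\geq 0$ is achievable. In each case, solving the quadratic produces the desired preimage.

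For the splitting, each listed subset is defined by polynomial equalities and strict inequalities in the coordinates $(h,c)$, hence is semialgebraic in the sense of \cite{pflaum}. It remains to check that the union of the nine sets is all of $\R^2$ and that the three blocks $S^{c>0}$, $S^{c=0}$, $S^{c<0}$ refine the partition of $\R^2$ by the sign of $c$. For $c>0$ and $c=0$ this is transparent after splitting further by the sign of $h$. The block $S^{c<0}$ requires one bit of care: the condition $c<\min\{-h,0\}$ specializes to $c<0$ when $h\leq 0$ and to $c<-h$ when $h>0$, while the pieces $\{c<0,\,c=-h\}$ and $\{-h<c<0\}$ both automatically force $h>0$; case analysis then shows these three strata exhaust $\{c<0\}$.

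The main obstacle is the surjectivity, and within it the discriminant bookkeeping that guarantees a real $y$ exists for every sign of $h$ and every $c$; the rest of the argument is a direct, essentially combinatorial, verification of the cover.
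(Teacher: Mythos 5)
Your proposal is correct and follows the same route the paper intends: the paper's own proof is only the one-line assertion that the claim ``follows directly by simple algebraic computation using the definition of the energy-Casimir mapping,'' and your argument (explicit preimages via the quadratic in $y$ with discriminant $(t+h/t)^2+4c$, plus the case check that the nine strata partition $\R^2$ by the sign of $c$) is exactly that computation carried out in full.
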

\begin{proof}
 The conclusion follows directly by simple algebraic computation using the definition of the energy-Casimir mapping.
\end{proof}
\medskip

\begin{remark}
The superscripts used to denote the sets $S$, are in agreement with the topology of the symplectic leaves of the Poisson manifold $(\R^3,\Pi_C)$, namely:
\begin{itemize}
\item[(i)] For $c\neq 0$,
$${\Gamma}_c=\{(x,y,z)\in\R^3:y(z-x)=c\}$$
is a hyperbolic cylinder.
\item[(ii)] For $c=0$,
$${\Gamma}_c=\{(x,y,z)\in\R^3:y(z-x)=c\}$$
is a union of two intersecting planes.
\end{itemize}
\end{remark}
The connection between the semialgebraic splittings of the image $Im({\mathcal EC})$ given by Proposition \ref{p51}, and the equilibrium states of the Lotka-Volterra system, is given in the following remark.

\begin{remark}\label{strt}
The semialgebraic splitting of the sets $S$ is described in terms of the image of equilibria of the Lotka-Volterra system through the map $\mathcal EC$ as follows:

\begin{itemize}
\item[(i)]
\begin{align*}
\ S^{c>0}&=\Sigma_{1}^{\leftarrow}\cup Im(\left.{\mathcal EC}\right|_{{\mathcal E}_{1}^{\star}})\cup
\Sigma_{1}^{\rightarrow},\ where \ {\mathcal E}_{1}^{\star}={\mathcal E}_{1}\setminus\{(0,0,0)\}, \\
\Sigma_{1}^{\leftarrow}&=\{(h,c)\in\R^2:h<0;c>0\},\ \Sigma_{1}^{\rightarrow}=\{(h,c)\in\R^2:h>0;c>0\}.
\end{align*}
\item[(ii)]
\begin{align*}
\  S^{c=0}&=Im(\left.{\mathcal EC}\right|_{{\mathcal E}_{2}^{\star}})\cup \Sigma_{0}\cap \Sigma_{0}^{\rightarrow},\ where \ {\mathcal E}_{2}^{\star}={\mathcal E}_{2}\setminus\{(0,0,0)\}, \\
\Sigma_{0}&=\{(0,0,0)\},\ \Sigma_{0}^{\rightarrow}=\{(h,c)\in\R^2:h>0;c=0\}.
\end{align*}

\item[(iii)]
\begin{align*}
\  S^{c<0}&=\Sigma_{3}^{\leftarrow}\cup Im(\left.{\mathcal EC}\right|_{{\mathcal E}_{3}^{\star}})\cup
\Sigma_{3}^{\rightarrow},\ where \ {\mathcal E}_{3}^{\star}={\mathcal E}_{3}\setminus\{(0,0,0)\}, \\
\Sigma_{3}^{\leftarrow}&=\{(h,c)\in\R^2:c<\min\{-h,0\}\},\ \Sigma_{3}^{\rightarrow}=\{(h,c)\in\R^2:-h<c<0\}.
\end{align*}
\end{itemize}
\end{remark}

All the stratification results can be gathered as shown in Figure \ref{fig3}. Note that for simplicity we adopted the notation $Im(\left.{\mathcal EC}\right|_{{\mathcal E}})\mathop=\limits^{not}:
\Sigma$.

\begin{figure}[H]
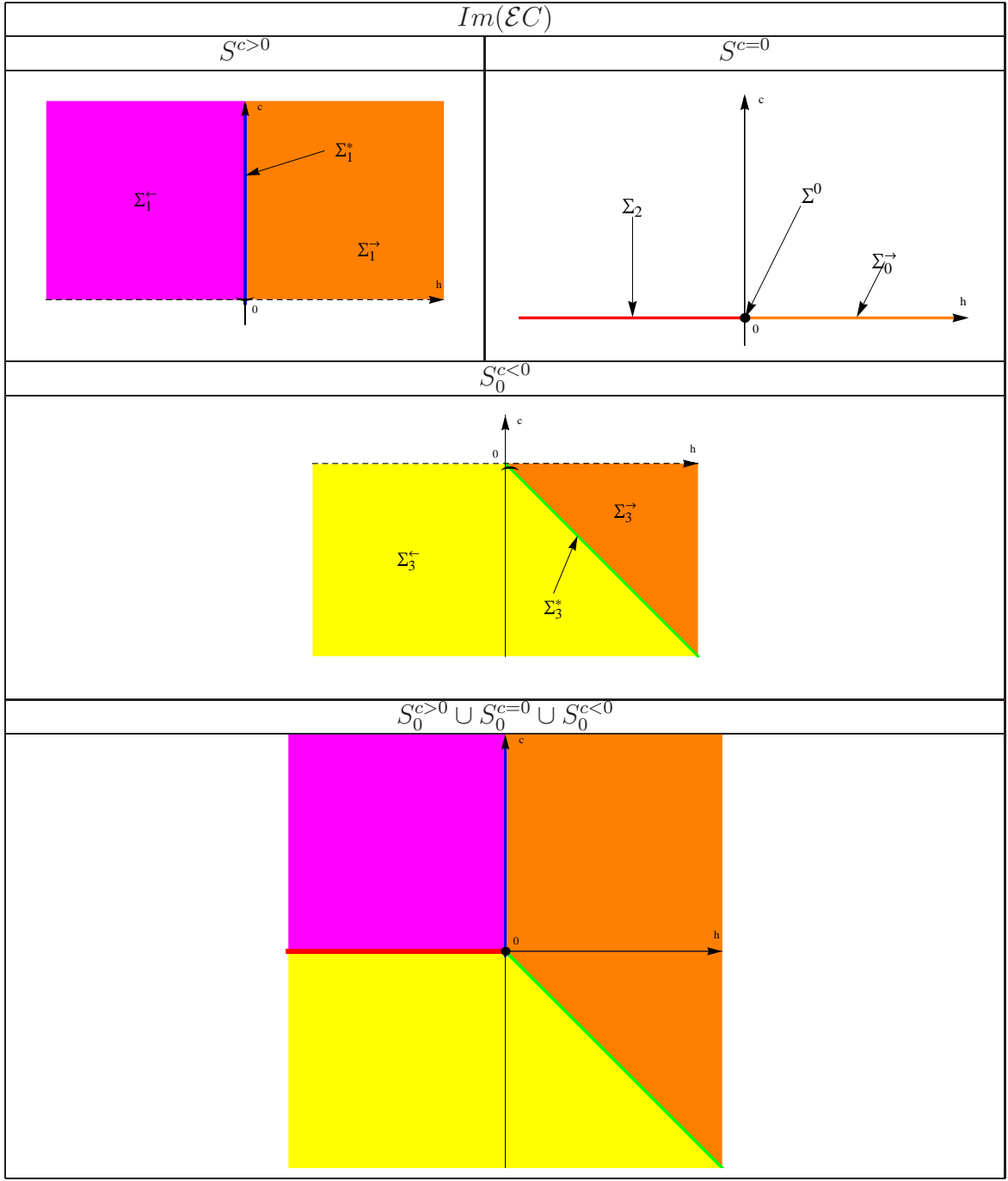

\vspace*{25pt}
\centering
\begin{tabular}{|c|c|}
\hline
\multicolumn{2}{|c|}{$Im({\mathcal EC})$}\\
\hline
$S^{c>0}$ & $S^{c=0}$\\
\hline
\scalebox{0.55}{%
\includegraphics*{im1.%
eps}} &
\scalebox{0.60}{%
\includegraphics*{im2.%
eps}} \\
\hline
\multicolumn{2}{|c|}{$S_0^{c<0}$ }\\
\hline
\multicolumn{2}{|c|}{\scalebox{0.55}{%
\includegraphics*{im3.%
eps}} }\\
\hline
\multicolumn{2}{|c|}{$S_0^{c>0}\cup S_0^{c=0}\cup S_0^{c<0}$}\\
\hline
\multicolumn{2}{|c|}{\scalebox{0.55}{%
\includegraphics*{imag.%
eps}}} \\
\hline
\end{tabular}
\caption{\it Semialgebraic splitting of $Im({\mathcal EC})$.}\label{fig3}
\end{figure}

\begin{remark}
As a convex set, the image of the energy-Casimir map is convexly generated by the images of the equilibrium states of the Lotka-Volterra system \eqref{sys}, namely:
$$Im({\mathcal EC})=\overline{\rm co}\{
Im(\left.{\mathcal EC}\right|_{{\mathcal E}_{1}}),Im(\left.{\mathcal EC}\right|_{{\mathcal E}_{2}}),Im(\left.{\mathcal EC}\right|_{{\mathcal E}_{3}})\}.$$
\end{remark}

\section{The topology of the fibers of the energy-Casimir mapping}

In this section we describe the topology of the fibers of $\mathcal EC$, considering for our study fibers over regular values of $\mathcal EC$ as well as fibers over the singular values. It will remain an open question how these fibers fit all together in a more abstract fashion, such as bundle structures in the symplectic Arnold-Liouville integrable regular case.

\begin{proposition}
According to the stratifications from the previous section, the topology of the fibers of $\mathcal EC$ can be described as in Tables \ref{tab1}, \ref{tab2}, \ref{tab3}:
\vskip1cm
\begin{table}[H]
\centering
\begin{tabular}{|c|c|c|c|}
\hline
\multicolumn{4}{|c|}{ $S^{c>0}$}\\
\hline
$A\subseteq S^{c>0}$ & $\Sigma_1^{\leftarrow}$ & $\Sigma_1^*$ & $\Sigma_1^{\rightarrow}$\\
\hline
${\cal F}_{(h,c)}\subseteq\R^3$ & \scalebox{0.2}{%
\includegraphics*{magenta.%
eps}}& \scalebox{0.2}{%
\includegraphics*{blue.%
eps}} & \scalebox{0.2}{%
\includegraphics*{orange.%
eps}}\\ \cline{2-4}
$(h,c)\in A$ & $\mathop{\coprod}\limits_{i = 1}^4(\R\times\{i\})$ & $\mathop{\coprod}\limits_{i = 1}^8(\R\times\{i\})\mathop{\coprod}\{pt\}\mathop{\coprod}\{pt'\}$ & $\mathop{\coprod}\limits_{i = 1}^4(\R\times\{i\})$\\
\hline
Dynamical & union of & union of 8 & union of \\
description & 4 orbits & orbits and two& 4 orbits \\
 &  & equilibrium points &  \\
\hline
\end{tabular}
\caption{\it Fibers classification corresponding to $S^{c>0}$.}\label{tab1}
\end{table}

\vskip1cm
\begin{table}[H]
\centering
\begin{tabular}{|c|c|c|c|}
\hline
\multicolumn{4}{|c|}{ $S^{c=0}$}\\
\hline
$A\subseteq S^{c=0}$ & $\Sigma_2$ & $\Sigma^0$ & $\Sigma_0^{\rightarrow}$\\
\hline
${\cal F}_{(h,c)}\subseteq\R^3$ & \scalebox{0.2}{%
\includegraphics*{red.%
eps}}& \scalebox{0.2}{%
\includegraphics*{black.%
eps}} & \scalebox{0.2}{%
\includegraphics*{orange.%
eps}}\\ \cline{2-4}
$(h,c)\in A$ & $\coprod\limits_{i = 1}^8(\R\times\{i\})\coprod\{pt\}\coprod\{pt'\}$ & $\coprod\limits_{i = 1}^8(\R\times\{i\})\coprod\{pt\}$ & $\coprod\limits_{i = 1}^4(\R\times\{i\})$\\
\hline
Dynamical & union of 8 & union of 8 & union of \\
description & orbits and two & orbits and one& 4 orbits \\
 & equilibrium points & equilibrium point &  \\
\hline
\end{tabular}
\caption{\it Fibers classification corresponding to $S^{c=0}$.}\label{tab2}
\end{table}
\vskip1cm
\begin{table}[H]
\centering
\begin{tabular}{|c|c|c|c|}
\hline
\multicolumn{4}{|c|}{ $S^{c<0}$}\\
\hline
$A\subseteq S^{c<0}$ & $\Sigma_3^{\leftarrow}$ & $\Sigma_3^*$ & $\Sigma_3^{\rightarrow}$\\
\hline
${\cal F}_{(h,c)}\subseteq\R^3$ & \scalebox{0.2}{%
\includegraphics*{yellow.%
eps}}& \scalebox{0.2}{%
\includegraphics*{green.%
eps}} & \scalebox{0.2}{%
\includegraphics*{orange.%
eps}}\\ \cline{2-4}
$(h,c)\in A$ & $\mathop{\coprod}\limits_{i = 1}^4(\R\times\{i\})$ & $\mathop{\coprod}\limits_{i = 1}^8(\R\times\{i\})\mathop{\coprod}\{pt\}\mathop{\coprod}\{pt'\}$ & $\mathop{\coprod}\limits_{i = 1}^4(\R\times\{i\})$\\
\hline
Dynamical & union of & union of 8 & union of \\
description & 4 orbits & orbits and two& 4 orbits \\
 &  & equilibrium points &  \\
\hline
\end{tabular}
\caption{\it Fibers classification corresponding to $S^{c<0}$.}\label{tab3}
\end{table}

\end{proposition}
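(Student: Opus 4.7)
The plan is to compute each fiber $\mathcal{F}_{(h,c)} = \{H = h\} \cap \{C = c\}$ explicitly by restricting to the symplectic leaves of $(\R^{3}, \Pi_C)$ and then imposing the Hamiltonian constraint. Since every fiber lies in a single leaf, the case structure of the proposition parallels the leaf structure: two hyperbolic sheets when $c\neq 0$, and a union of two intersecting planes when $c=0$.

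For $c\neq 0$ I would use the parametrization $z = x + c/y$ on each sheet ($y>0$ and $y<0$) to reduce $\{H=h\}$ to the planar curve
\begin{equation*}
\gamma_{h,c}:\ yx^{2} + (c-y^{2})x + hy = 0,\qquad y\neq 0.
\end{equation*}
Viewing $\gamma_{h,c}$ as a quadratic in $x$, its discriminant is $\Delta(y)=(c-y^{2})^{2}-4hy^{2}=(y^{2}-u_{+})(y^{2}-u_{-})$ with $u_{\pm}=(c+2h)\pm 2\sqrt{h(h+c)}$. The sign of $h(h+c)$ together with the sign of $c+2h$ controls whether $\Delta$ has real zeros in $y^{2}$, and this algebraic data matches precisely the partition of $S^{c>0}$ and $S^{c<0}$ into the strata $\Sigma_i^{\leftarrow},\Sigma_i^{*},\Sigma_i^{\rightarrow}$ from Remark~\ref{strt}. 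A short sign analysis then shows that on each generic stratum ($\Sigma_i^{\leftarrow}$ or $\Sigma_i^{\rightarrow}$), $\gamma_{h,c}$ has two smooth unbounded components per sheet (four arcs in total, free of equilibria); on each critical stratum $\Sigma_i^{*}$ (i.e.\ $h=0$ or $c=-h$), $\Delta$ becomes a perfect square and $\gamma_{h,c}$ factors into two rational components crossing transversally at one equilibrium of $\mathcal{E}_1$ or $\mathcal{E}_3$ on each sheet, yielding the eight arcs plus two equilibria recorded in Tables~\ref{tab1} and~\ref{tab3}.

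For $c=0$ the leaf degenerates to $\{y=0\}\cup\{x=z\}$, and I would intersect $\{H=h\}$ with each plane separately: the plane $\{y=0\}$ yields the hyperbola $xz=-h$, and the plane $\{x=z\}$ yields $y=x+h/x$, each with two branches. When $h>0$ ($\Sigma_0^{\rightarrow}$) the four branches are pairwise disjoint and carry no equilibria; when $h<0$ ($\Sigma_2$) each branch passes through one of the two points $(\pm\sqrt{-h},0,\pm\sqrt{-h})\in\mathcal{E}_2$ and is thereby cut into two arcs, giving $8$ arcs and $2$ equilibria; and at the vertex stratum $\Sigma^{0}=\{(0,0)\}$ a direct factorization of the ideal $(H,C)$ shows that $\mathcal{F}_{(0,0)}$ is the union of the three coordinate axes and the diagonal $\{x=y=z\}$, meeting only at the origin, hence $8$ rays plus one equilibrium.

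The last step in each case is to verify that each connected piece of $\mathcal{F}_{(h,c)}$ minus the equilibria is a single orbit homeomorphic to $\R$, not a circle. This I would do by exhibiting on each arc a coordinate among $x,y,z$ whose time derivative under \eqref{sys} has constant sign away from equilibria, which forces monotonicity of the time parameter and precludes closed orbits; equilibria then appear only as endpoints of arcs in the closure, consistently with Theorem~\ref{t33}. The main obstacle is the bookkeeping of connected components: because $\gamma_{h,c}$ has turning points where the two $x$-branches meet and because two branches in the $(x,y)$-parametrization can join into a single smooth arc in $\R^{3}$, each of the nine strata must be enumerated individually and matched against the corresponding colored diagram in Tables~\ref{tab1}--\ref{tab3}.
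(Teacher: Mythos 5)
Your proposal is correct and follows the same route as the paper, whose entire proof is the single sentence that the conclusion ``follows by simple computations according to the topology of the solution set of $H=h$, $C=c$'' over each stratum; your parametrization $z=x+c/y$, the discriminant analysis with $u_{\pm}=(c+2h)\pm2\sqrt{h(h+c)}$, and the case-by-case treatment of $c=0$ are a faithful (and far more explicit) execution of exactly that computation. All the counts you derive (4 arcs on the generic strata, 8 arcs plus 2 equilibria on $\Sigma_1^{*}$, $\Sigma_2$, $\Sigma_3^{*}$, and 8 rays plus the origin at $(0,0)$) match the tables.
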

\noindent \begin{proof}
The conclusion follows by simple computations according to the topology of the solution set of the system:
$$\left\{ \begin{array}{l}
 H(x,y,z) = h \\
 C(x,y,z) = c \\
 \end{array} \right.$$
 where $(h,c)$ belongs to the semialgebraic manifolds introduced in the above section.
\end{proof}
\medskip

A presentation that puts together the topological classification of the fibers of $\mathcal EC$ and the topological classification of the symplectic leaves of the Poisson manifold $(\R^3,\Pi_C)$, is given in Figure \ref{fig6}.

\begin{figure}[H]
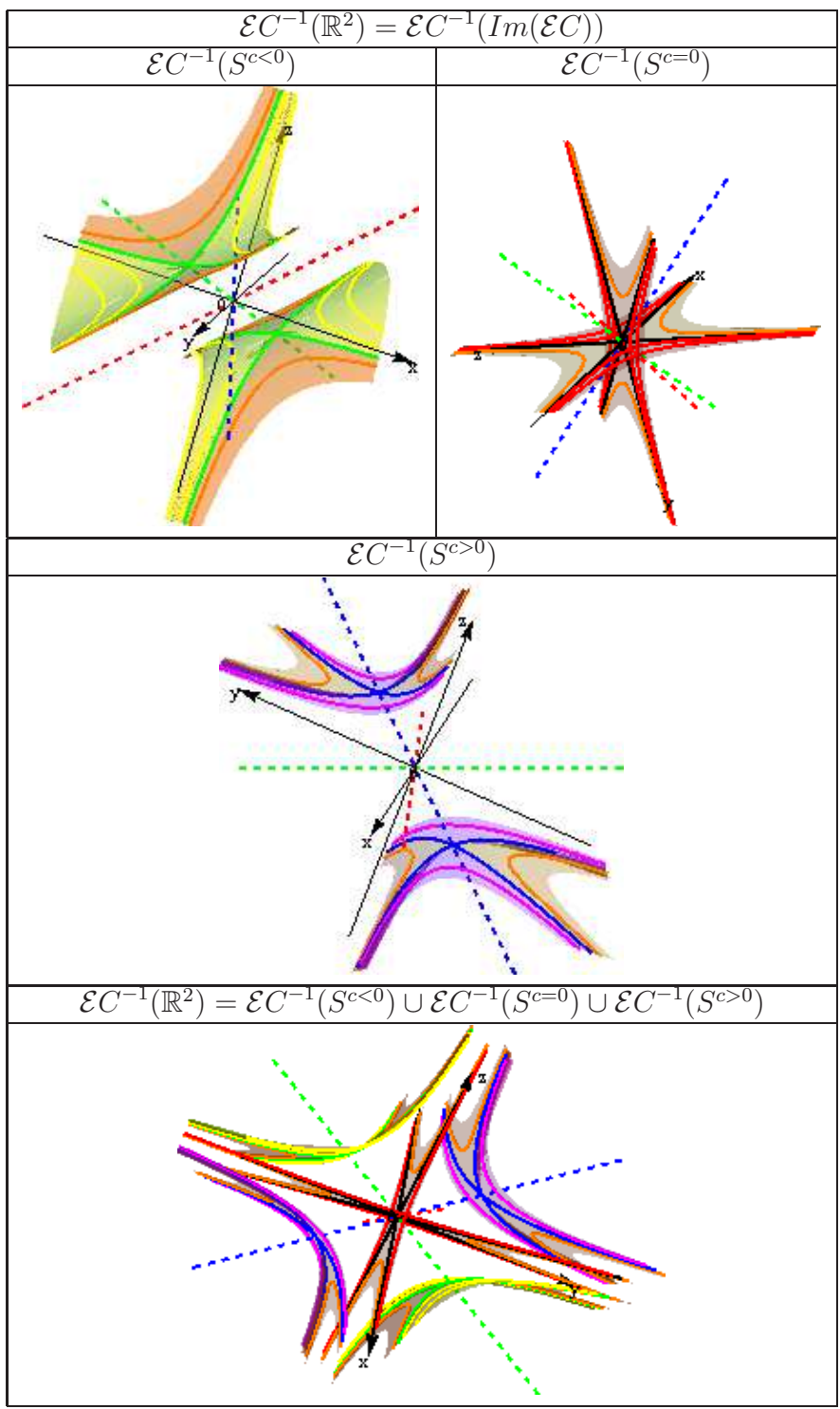

\centering
\vspace*{20pt}
\index{\footnote{}}
\begin{tabular}{|c|c|}
\hline
\multicolumn{2}{|c|}{${\mathcal EC}^{-1}(\R^2)={\mathcal EC}^{-1}(Im({\mathcal EC}))$}\\
\hline
${\mathcal EC}^{-1}(S^{c<0})$ & ${\mathcal EC}^{-1}(S^{c=0})$ \\
\hline
\scalebox{0.9}{%
\includegraphics*{sigmam15.%
eps}}
& \scalebox{0.9}{%
\includegraphics*{sigma0.%
eps}} \\
\hline
\multicolumn{2}{|c|}{${\mathcal EC}^{-1}(S^{c>0})$}  \\
\hline
\multicolumn{2}{|c|}{\scalebox{0.9}{%
\includegraphics*{sigma15.%
eps}}} \\
\hline
\multicolumn{2}{|c|}{${\mathcal EC}^{-1}(\R^2)={\mathcal EC}^{-1}(S^{c<0})\cup{\mathcal EC}^{-1}(S^{c=0})\cup{\mathcal EC}^{-1}(S^{c>0})$}  \\
\hline
\multicolumn{2}{|c|}{\scalebox{0.9}{%
\includegraphics*{sigma.%
eps}}} \\
\hline
\end{tabular}
\caption{\it Phase portrait splitting.}\label{fig6}
\end{figure}

\section{Lax Formulation}

In this section we present a Lax formulation of the Lotka-Volterra system \eqref{sys}.

Let us first note that as the system \eqref{sys} restricted to a
regular symplectic leaf, give rise to a sypmlectic Hamiltonian
system that is completely integrable in the sense of Liouville and
consequently it has a Lax formulation.

Is a natural question to ask if the unrestricted system admit a
Lax formulation. The answer is positive and is given by the
following proposition:

\begin{proposition}
The Lotka-Volterra system \eqref{sys} can be written in the Lax
form $\dot L=[L,B]$, where the matrices $L$ and respectively $B$
are given by:
$$
L= \left[{\begin{array}{*{20}c}
   0 & {x - y} & {z}  \\
   { - x + y} & 0 & {i(x+y-z)}  \\
   { - z} & { -i(x+y-z)} & 0  \\
\end{array}} \right],
\ B= \left[{\begin{array}{*{20}c}
   0 & {iz} & {i(x-y)}  \\
   {-iz} & 0 & 0  \\
   { -i(x-y)} & 0 & 0  \\
\end{array}} \right].$$
\end{proposition}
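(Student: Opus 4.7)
My plan is to verify the Lax equation by direct component-wise computation, exploiting the manifest structure of the two matrices to cut the work down.

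First I would observe that both $L$ and $B$ are complex anti-symmetric matrices: one checks immediately that $L_{21}=-L_{12}$, $L_{31}=-L_{13}$, $L_{32}=-L_{23}$, and similarly for $B$. Since $(LB-BL)^{T}=B^{T}L^{T}-L^{T}B^{T}=BL-LB=-[L,B]$, the commutator $[L,B]$ is automatically anti-symmetric, and so is $\dot L$ (because differentiation preserves the form of $L$). Thus it suffices to verify the identity $\dot L_{ij}=[L,B]_{ij}$ on the three strictly upper-triangular entries $(i,j)\in\{(1,2),(1,3),(2,3)\}$.

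Next I would compute the three relevant entries of $\dot L$ by plugging in system \eqref{sys}. A short calculation gives
\begin{align*}
\dot L_{12}&=\dot x-\dot y=-(x-y)(x+y-z),\\
\dot L_{13}&=\dot z=z(x+y-z),\\
\dot L_{23}&=i(\dot x+\dot y-\dot z)=i\bigl[z^{2}-(x-y)^{2}\bigr].
\end{align*}

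Then I would compute the same three entries of $[L,B]=LB-BL$ by expanding the products row-by-row. Because $B$ has only four nonzero entries, each matrix product reduces to at most two terms per slot; for instance,
\begin{align*}
(LB)_{12}&=0,\qquad (BL)_{12}=i(x-y)\cdot(-i)(x+y-z)=(x-y)(x+y-z),\\
(LB)_{13}&=0,\qquad (BL)_{13}=iz\cdot i(x+y-z)=-z(x+y-z),\\
(LB)_{23}&=-i(x-y)^{2},\qquad (BL)_{23}=-iz^{2}.
\end{align*}
Subtracting gives exactly the three expressions for $\dot L_{12},\dot L_{13},\dot L_{23}$ above, and by anti-symmetry the remaining three entries follow automatically. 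The diagonal entries of $[L,B]$ vanish because the diagonal of an anti-symmetric matrix is zero.

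The ``hard part'' here is really only bookkeeping: keeping track of the factors of $i$ and signs when expanding $LB-BL$. There is no conceptual obstacle, since the statement is a purely algebraic identity between polynomials in $x,y,z$; however one could make the verification more transparent by recording that $L$ and $B$ both lie in the complexification of $\mathfrak{so}(3)$, so that the Lax pair realizes the flow on a coadjoint orbit and the conserved quantities $\mathrm{tr}(L^{2})$, $\mathrm{tr}(L^{4})$, $\dots$ recover (up to constants) the Hamiltonian and Casimir of Theorem \ref{t22}. This last observation is not needed for the proof but serves as a useful sanity check after the calculation.
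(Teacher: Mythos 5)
Your verification is correct: the paper actually states this proposition with no proof at all, and your strategy---checking only the entries $(1,2)$, $(1,3)$, $(2,3)$ after noting that both $\dot L$ and $[L,B]$ are antisymmetric---is exactly the right one; I have redone the algebra and all three identities $\dot L_{12}=-(x-y)(x+y-z)=[L,B]_{12}$, $\dot L_{13}=z(x+y-z)=[L,B]_{13}$, $\dot L_{23}=i\bigl[z^{2}-(x-y)^{2}\bigr]=[L,B]_{23}$ hold. The only quibble is with your closing aside: for a $3\times 3$ antisymmetric $L$ every trace $\mathrm{tr}(L^{2k})$ is a function of $\mathrm{tr}(L^{2})=4(H-C)$ alone, so the Lax pair's spectral invariants recover only the single combination $H-C$ of the Hamiltonian and Casimir of Theorem \ref{t22}, not both independently---but, as you note yourself, that remark is not part of the proof.
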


\bigskip
\bigskip

\noindent {\sc R.M. Tudoran}\\
The West University of Timi\c soara\\
Faculty of Mathematics and C.S., Department of Mathematics,\\
B-dl. Vasile Parvan, No. 4,\\
300223-Timi\c soara, Romania.\\
E-mail: {\sf tudoran@math.uvt.ro}\\
Supported by CNCSIS-UEFISCDI, project number PN II-IDEI code 1081/2008 No. 550/2009.
\medskip

\noindent {\sc A. G\^\i rban}\\
"Politehnica" University of Timi\c soara\\
Department of Mathematics, \\
Pia\c ta Victoriei nr. 2,\\
300006-Timi\c soara, Rom\^ania.\\
E-mail: {\sf anania.girban@gmail.com}
\medskip

\end{document}